\newtheorem{theorem}{Theorem}[section]
\newtheorem{corollary}[theorem]{Corollary}
\newtheorem{definition}{Definition}
\newtheorem{mydef-propostion}{Propostion}
\newtheorem{example}{Example}
\title{Algorithm for $\mathcal{B}$-partitions, parameterized complexity of the matrix determinant and permanent \footnote{The preliminary idea of the work was presented in `18th Haifa Workshop on Interdisciplinary Applications of Graphs, Combinatorics and Algorithms' at University of Haifa, Israel \cite{ranveer2018parameterized}. The work is subsequently improved.}}
 \author{Ranveer Singh\thanks{ Technion-Israel Institute of Technology, Haifa 32000, Israel. Email: \texttt{ranveer@iitj.ac.in} (Corresponding author)}\\
  Vivek Vijay\thanks{Indian Institute of Technology Jodhpur, 342011, India.
 Email: \texttt{vivek@iitj.ac.in}}\\
RB Bapat\thanks{Indian Statistical Institute New Delhi, 110 016, India.
 Email: \texttt{rbb@isid.ac.in}}}
\begin{document}
        \maketitle
%

\begin{abstract}
Every square matrix $A=(a_{uv})\in \mathcal{C}^{n\times n}$ can be represented as a digraph having $n$ vertices. In the digraph, a block (or 2-connected component) is a maximally connected subdigraph that has no cut-vertex. The determinant and the permanent of a matrix can be calculated in terms of the determinant and the permanent of some specific induced subdigraphs of the blocks in the digraph. Interestingly, these induced subdigraphs are vertex-disjoint and they partition the digraph. Such partitions of the digraph are called the $\mathcal{B}$-partitions. In this paper, first, we develop an algorithm to find the $\mathcal{B}$-partitions. Next, we analyze the parameterized complexity of matrix determinant and permanent, where, the parameters are the sizes of blocks and the number of cut-vertices of the digraph. We give a class of combinations of cut-vertices and block sizes for which the parametrized complexities beat the state of art complexities of the determinant and the permanent.  \\
\end{abstract}
\textit{keywords:} $\mathcal{B}$-partitions, Block (2-connected component), Determinant, Permanent.\\
AMS Subject Classifications. 05C85, 11Y16, 15A15, 05C20.
\section{Introduction}
The determinant and the permanent of a matrix are the classical problems of matrix theory \cite{abdollahi2012determinants,helton2009determinant,
bibak2013determinant,pragel2012determinants,
huang2012determinant,bibak2013determinants,
bapat2014adjacency,
hwang2003permanents,harary1969determinants,
farrell2000permanents,
wanless2005permanents,minc1984permanents}. The matrix determinant is significantly used in the various branches of science. As the zero determinant of the adjacency matrix of a graph ensures its positive nullity, the well-known problem, to characterize the graphs with positive nullity \cite{von1957spektren,bibak2013determinant}, boils down to find the determinant. Nullity of graphs is applicable in various branches of science, in particular, quantum chemistry, Huckel molecular orbital theory \cite{lee1994chemical,gutman2011nullity} and social network theory \cite{leskovec2010signed}. The number of spanning trees (forests), various resistance distances in graphs are directly related to the determinant of submatrices of its Laplacian matrix \cite{bapat2010graphs}. 
The permanent of a square matrix has significant graph theoretic interpretations. It is equivalent to find out the number of cycle-covers in the directed graph corresponding to its adjacency matrix. Also, the permanent is equal to the number of the perfect matching in the bipartite graph corresponding to its biadjacency matrix. Theory of permanents provides an effective tool in dealing with order statistics corresponding to random variables which are independent but possibly nonidentically distributed  \cite{bapat1989order}. There are various methods which use the graphical representation of matrix to calculate its determinant and permanent \cite{harary1962determinant,greenman1976graphs,
singh2017characteristic,singh2018b,bapat2010graphs}.  While determinant can be solved in polynomial time, computing permanent of a matrix is a ``$NP$-hard problem" which can not be done in polynomial time unless, $P=NP$ \cite{valiant1979complexity,wei2010matrix}.

The determinant and permanent of an $n\times n$ matrix $A$, denote by $\det(A)$, per$(A)$, respectively, are defined as
$$\det(A) = \sum_{\sigma}sgn(\sigma)a_{1\sigma(1)}\hdots a_{n\sigma(n)}
,$$ $$
\text{per}(A) = \sum_{\sigma} a_{1\sigma(1)}\hdots a_{n\sigma(n)}
,$$
where the summation is over all permutations $\sigma(1), \hdots, \sigma(n)$ of $1,\hdots,n,$ and sgn is 1 or −1 accordingly as $\sigma$ is even or odd.

Thanks to the English computer scientist Alan Turing, the matrix determinant can be computed in the polynomial time using the $LUP$ decomposition. Interestingly, the asymptotic complexity of the matrix determinant is same as that of matrix multiplication of two matrices of the same order. The theorem which relates the complexity of matrix product and the matrix determinant is as follows. 

\begin{theorem}\cite{aho1974design} \label{md}
Let $M(n)$ be the time required to multiply two $n\times n$ matrices over some ring, and $A$ is an $n\times n$ matrix. Then, we can compute the determinant of $A$ in $O(M(n))$ steps.
\end{theorem}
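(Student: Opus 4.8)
The plan is to establish a chain of reductions: computing $\det(A)$ reduces to producing a triangular ($LU$, or with pivoting $LUP$) factorization of $A$, this factorization reduces to triangular matrix inversion together with matrix multiplication, and triangular inversion in turn reduces to matrix multiplication. Each reduction is a divide-and-conquer recursion on $2\times 2$ block matrices, and the bound follows by solving the resulting recurrences against $M(n)$.

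First I would observe that the determinant is immediate from a factorization $A = LU$ with $L$ unit lower triangular and $U$ upper triangular, since $\det(A) = \det(L)\det(U) = \prod_i u_{ii}$, an $O(n)$ sweep of the diagonal. When a nonzero pivot cannot be guaranteed, one computes $PA = LU$ for a permutation matrix $P$ and uses $\det(A) = \det(P)^{-1}\prod_i u_{ii} = \pm\prod_i u_{ii}$, the sign being the parity of $P$. Thus the entire task is to produce the factorization within $O(M(n))$. I would obtain it recursively by partitioning
\[
A = \begin{pmatrix} A_{11} & A_{12} \\ A_{21} & A_{22} \end{pmatrix}
\]
into four $\tfrac{n}{2}\times\tfrac{n}{2}$ blocks, factoring $A_{11} = L_{11}U_{11}$ recursively, setting $U_{12} = L_{11}^{-1}A_{12}$ and $L_{21} = A_{21}U_{11}^{-1}$, forming the Schur complement $S = A_{22} - L_{21}U_{12}$, and factoring $S = L_{22}U_{22}$ recursively; this assembles the triangular factors of $A$ and gives $\det(A) = \det(A_{11})\det(S)$. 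The cost is two factorizations of size $\tfrac{n}{2}$, two triangular inversions of size $\tfrac{n}{2}$, and a constant number of $\tfrac{n}{2}$-sized products, so $D(n) \le 2D(n/2) + O(I_\triangle(n/2)) + O(M(n/2))$.

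It then remains to bound triangular inversion. A triangular matrix's two diagonal blocks are inverted recursively and its off-diagonal block is recovered by two matrix products, yielding $I_\triangle(n) \le 2 I_\triangle(n/2) + O(M(n/2))$; the same block-Schur identity, applied to a full matrix, shows that general inversion also costs $O(M(n))$. Since $M(n) = \Omega(n^2)$ satisfies the mild regularity condition $2M(n/2) \le c\,M(n)$ for some constant $c < 1$ (true for any $M(n) = \Theta(n^{\omega})$ with $\omega > 1$), the Master theorem applied to each recurrence of the form $T(n) = 2T(n/2) + O(M(n))$ gives $T(n) = O(M(n))$; composing the reductions yields the determinant in $O(M(n))$ total.

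The main obstacle is not the algebra of the block formulas but the handling of singular pivots: the recursion requires $A_{11}$, and each successive leading block, to be invertible, which need not hold even over a field. Resolving this forces the permutation into the scheme — the $LUP$ rather than the plain $LU$ factorization — and I would have to verify that pivoting threads through the block recursion without inflating the cost beyond $O(M(n))$ while the sign $\det(P)$ is tracked correctly. Over a general ring, where pivots and inverses may simply fail to exist, this step is the delicate point and is what I would treat most carefully (or circumvent via a division-free variant).
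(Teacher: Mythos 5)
The paper itself contains no proof of this theorem: it is quoted from Aho--Hopcroft--Ullman \cite{aho1974design}, so your proposal can only be measured against the classical argument in that reference. Your skeleton is exactly theirs---reduce the determinant to an $LUP$ factorization (so that $\det(A)=\pm\prod_i u_{ii}$), reduce the factorization and triangular inversion to matrix multiplication by divide and conquer, and close recurrences of the form $T(n)\le 2T(n/2)+O(M(n))$ using the regularity of $M(n)$. That part is fine and standard.

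However, the point you defer---pivoting---is precisely the substance of the theorem's proof, and with the decomposition you chose it does not go through. Splitting $A$ into four square $\frac{n}{2}\times\frac{n}{2}$ blocks and recursing on $A_{11}$ fails whenever $A_{11}$ is singular, and no permutation ``local'' to that corner can repair this: you would need to know in advance which $n/2$ columns of the top rows form an invertible block, and finding such a block is the actual work. Aho--Hopcroft--Ullman resolve this by recursing on \emph{rectangular} matrices rather than square ones: the top $\frac{n}{2}\times n$ half is factored recursively as $L_1U_1P_1$; because the top half of a nonsingular matrix has full row rank, the recursion itself produces the column permutation $P_1$ that brings an invertible $\frac{n}{2}\times\frac{n}{2}$ upper triangular block of $U_1$ into pivot position; $P_1$ is then applied to the bottom half, the Schur complement is formed with that guaranteed-invertible block, and the remaining rectangular piece is factored recursively. (If at some stage no pivot exists at all, the matrix is singular and the determinant is $0$, which disposes of the singular case.) Your fallback of a ``division-free variant'' does not rescue the bound either: known division-free determinant algorithms over commutative rings are strictly slower than $O(M(n))$---which is also why the statement's phrase ``over some ring'' is loose; the $LUP$ argument needs a field, or at least invertible pivots. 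So the missing idea is the rectangular, row-split recursion that makes pivoting compatible with the $O(M(n))$ recurrence; without it, your proposal is an outline of the easy half of the argument with the hard half flagged but not done.
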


In general the complexity of multiplication of two matrices of order $n$ is $O(n^{\epsilon})$ where, $2\le \epsilon \le 3$. The complexities of the multiplication of two $n$ order matrices by different methods are as follows. The schoolbook matrix multiplication: $O(n^3)$,  Strassen algorithm: $O(n^{2.807})$ \cite{aho1974design}, Coppersmith-Winograd algorithm: $O(n^{2.376})$ \cite{coppersmith1990matrix}, Optimized CW-like algorithms $O(n^{2.373})$ \cite{davie2013improved,williams2011breaking,le2014powers}. The complexities of the determinant of a matrix of order $n$ by different methods are as follows. The Laplace expansion: $O(n!)$, Division-free algorithm: $O(n^4)$ \cite{rote2001division}, $LUP$ decomposition: $O(n^3)$, Bareiss algorithm: $O(n^3)$ \cite{bareiss1968sylvester}, Fast matrix multiplication: $O(n^{2.373})$ \cite{aho1974design}. Note that according to the Theorem \ref{md} the asymptotic complexity of matrix determinant is equal to that of matrix multiplication. The complexity of matrix determinant by fast matrix multiplication is same as the complexity of Optimized CW-like algorithms for matrix multiplication. However, the fastest known method to compute permanent of matrix of order $n$ is Ryser's method, having complexity $O(2^nn^2).$

In \cite{singh2017characteristic} it is shown that determinant (permanent) of a matrix can be calculated in terms of the determinant (permanent) of some subdigraphs of blocks in its digraph. The determinant (permanent) of a subdigraph means the determinant (permanent) of principle submatrix corresponding to the vertex indices of subdigraph.  First, we give some preliminaries in order to understand the procedure.

A digraph $G = (V(G), E(G))$ is a collection of a vertex set $V(G)$, and an edge set $E(G)\subseteq V(G) \times V(G)$. An edge $(u,u)$ is called a loop at the vertex $u$. A simple graph is a special case of a digraph, where $E(G)\subseteq \{(u,v): u \neq v \}$; and if $(u,v) \in E(G)$, then $(v,u) \in E(G)$. A weighted digraph is a digraph  equipped with  a weight function $f: E(G) \to \mathcal{C}$. If $V(G) = \emptyset$ then, the digraph $G$ is called a null graph. A subdigraph of $G$ is a digraph $H$, such that, $V(H) \subseteq V(G)$ and $E(H) \subseteq E(G)$. The subdigraph $H$ is an induced subdigraph of $G$ if $u, v \in V(H)$ and $(u,v) \in E(G)$ indicate $(u,v) \in E(H)$. Two subdigraphs $H_1$, and $H_2$ are called vertex-disjoint subdigraphs if $V(H_1)\cap V(H_2)=\emptyset$. A path of length $k$ between two vertices $v_1$, and $v_k$ is a sequence of distinct vertices $v_{1}, v_{2}, \dots, v_{k-1}, v_{k}$, such that, for all $i=1,2, \dots, k-1$,  either $(v_{i}, v_{i +1}) \in E(G)$ or $(v_{i + 1}, v_i) \in E(G)$. We call a digraph $G$ be connected, if there exist a path between any two distinct vertices. A component of $G$ is a maximally connected subdigraph of $G$. A cut-vertex of $G$ is a vertex whose removal results increase the number of components in $G$. Now, we define the idea of a block of a digraph, which plays the fundamental role in this article. A digraph having no cut-vertex is known as nonseparable digraph, however in this article we restrict our study to the digraphs which are not nonseparable.

    \begin{definition}{\bf Block:} 
    A block is a maximally connected subdigraph of $G$ that has no cut-vertex.
    \end{definition}
    Note that, if $G$ is a connected digraph having no cut-vertex, then $G$ itself is a block. A block is called a pendant block if it contains only one cut-vertex of $G$, or it is the only block in that component. The blocks in a digraph can be found in linear time using John and Tarjan algorithm \cite{hopcroft1971efficient}. The cut-index of a cut-vertex $v$ is the number of blocks it is associated with.

    A square matrix $A=(a_{uv})\in \mathcal{C}^{n\times n}$ can be depicted by the weighted digraph $G(A)$ with $n$ vertices. If $a_{uv} \neq 0$, then $(u,v) \in E(G(A))$, and $f(u,v) = a_{uv}$. The diagonal entry $a_{uu}$ corresponds to a loop at vertex $u$ having weight $a_{uu}$. If $v$ is a cut-vertex in $G(A)$, then we call $a_{vv}$ as the corresponding cut-entry in $A$. The following example will make this assertion transparent.

\begin{example}
    The digraphs corresponding to the matrices $M_1$, and $M_2$ are presented in Figure \ref{fig1}.
    $$M_1= \begin{bmatrix}
    0& 3& 2& 0& 0& 0& 0\\
    -7 &\color{red}5& -1& 1& -8& 0& 0\\
    2 &-1& 0& 0& 0& 0& 0\\
    0& 1& 0& 0 & 0& -3& 0\\
    0 & 12& 0& 0& 0& 1& 0\\
    0 &0& 0& 1& 1& \color{red}-4& 2\\
    0& 0& 0& 0& 0& 20& 3
    \end{bmatrix}, \ \ \ \ M_2= \begin{bmatrix}
    0& 3& 2& 0& 0& 0& 0& 0\\
    -7 &\color{red}5& -1& 1& -8& 0& 0&0\\
    2 &-1& 0& 0& 0& 0& 0&0\\
    0& 1& 0& 0 & 0& -3& 0&0\\
    0 & 12& 0& 0& 0& 1& 0&0\\
    0 &0& 0& 1& 1& \color{red}-4& 2&-2\\
    0& 0& 0& 0& 0& 20& 3&0\\
    0& 0& 0& 0& 0& -2& 0&10
    \end{bmatrix}.$$
     
    \begin{figure}
    \centering
    \begin{subfigure}{.5\textwidth}
      \centering
      \includegraphics[width=0.7\linewidth]{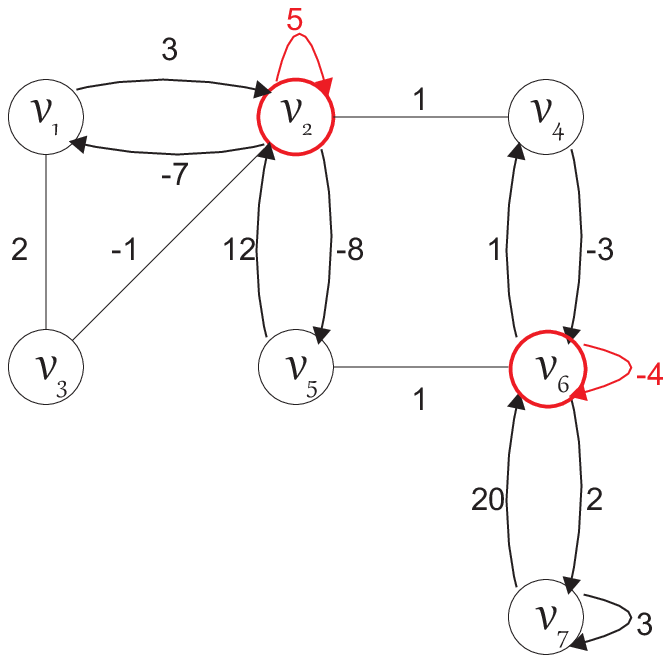}
      \caption{}
      \label{fig:sub1}
    \end{subfigure}%
    \begin{subfigure}{.5\textwidth}
      \centering
      \includegraphics[width=0.7\linewidth]{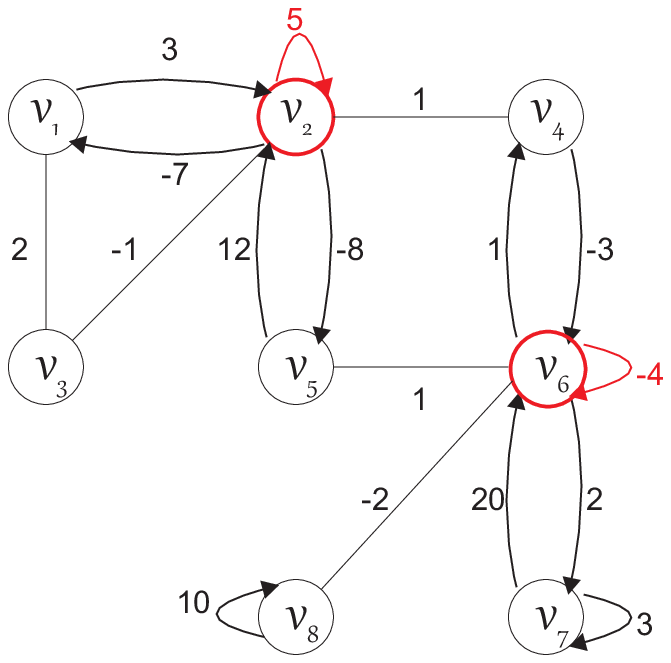}
      \caption{}
      \label{fig:sub2}
    \end{subfigure}
    \caption{(a) Digraph of matrix $M_1$ \ (b) Digraph of matrix $M_2$}
    \label{fig1}
    \end{figure}
    
    The cut-entries, and the cut-vertices are shown in red in the matrices $M_1$, and $M_2$, as well as in their corresponding digraphs $G(M_1)$, and $G(M_2)$. Note that, when $a_{uv} = a_{vu} \neq 0$, we simply denote edges $(u,v)$, and $(v,u)$ with an undirected edge $(u,v)$ with weight $a_{uv}$. As an example, in $G(M_1)$ the edge $(v_1, v_3)$ and $(v_3,v_1)$ are undirected edges.   The digraph $G(M_1)$, depicted in the Figure \ref{fig1}(a), has blocks $B_1, B_2$, and $B_3$ which are induced subdigraphs on the vertex subsets $\left\{v_1,v_2,v_3\right\}, \left\{v_2,v_4,v_5,v_6\right\}$, and $\left\{v_6,v_7\right\}$, respectively. Here, cut-vertices are $v_2$, and $v_6$ with cut-indices 2. Similarly, the digraph $G(M_2)$ in the Figure \ref{fig1}(b), has blocks $B_1, B_2, B_3$, and $B_4$ on the vertex sets $\left\{v_1,v_2,v_3\right\}, \left\{v_2,v_4,v_5,v_6\right\},\left\{v_6,v_7\right\}$, and $\left\{v_6,v_8\right\}$, respectively. Here, cut-vertices are $v_2$, and $v_6$ with cut-indices $2$, and $3$, respectively.
    \end{example}

It is to be noted that, for $G$ having $k$ blocks with number of vertices in blocks equal to $n_1,n_2,\hdots, n_k$, respectively, then the following relation holds 
\begin{equation}\label{eq1}
n=\sum_{i=1}^{k}(n_i-1)+1.
\end{equation}

Rest of the paper is organized as follows: In Section \ref{betapartition}, we first define a new partition of a digraph. An algorithm to find the $\mathcal{B}$-partitions of a digraph is given in Subsection \ref{albp}. In Subsection \ref{dp}, we give steps to find determinant and permanent of a matrix using $\mathcal{B}$-partitions from \cite{singh2017characteristic}. In Section \ref{pcmdp}, we give parametrized complexity of determinant and permanent.

\section{$\mathcal{B}$-partitions of a digraph} \label{betapartition}
    We define a new partition of digraph which is used to find determinant and permanent of a matrix.

\begin{definition} \label{def1}
     Let $G$ be a digraph having $k$ blocks $B_1, B_2, \hdots B_k$. Then, a $\mathcal{B}$-partition of $G$ is a partition in $k$ vertex disjoint induced subdigraphs $\hat{B_1}, \hat{B_2}, \hdots, \hat{B_k}$, such that, $\hat{B_i}$ is a subdigraph of $B_i$, $i=1,\hdots,k$. The $\det$-summand, and $\emph{per}$-summand of this $\mathcal{B}$-partition is  $$\prod_{i}^{k}\det (\hat{B_i}), ~\text{and}~ \prod_{i}^{k}\emph{per} (\hat{B_i}),$$ respectively, where, by convention $\det(\hat{B_i})=1, ~\text{and}~\ \emph{per}(\hat{B_i})=1 $ if $\hat{B_i}$ is a null graph. 
\end{definition}

 We give all the $\mathcal{B}$-partitions of matrices $M_1$, and $M_2$ in Figure \ref{allbm1}, \ref{allbm2}, respectively. 
     
        \begin{figure}[!htb]
             \begin{center}
     \includegraphics[width=0.5\linewidth]{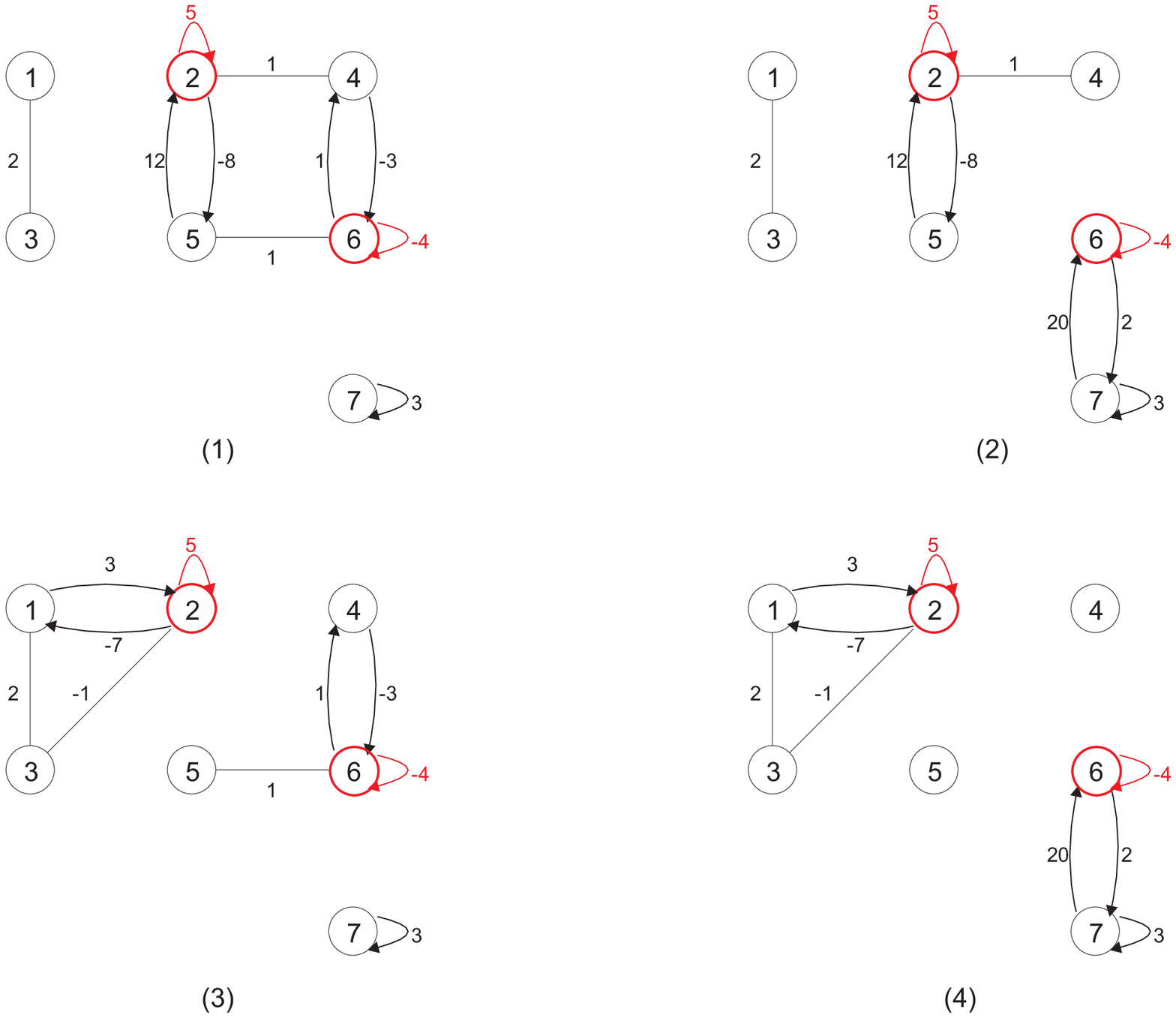}
             \end{center}
               
             \caption{$\mathcal{B}$-partitions of digraph of matrix $M_1$}
             \label{allbm1}
             \end{figure}    
         
             \begin{figure}[!htb]
                       \begin{center}
    \includegraphics[width=0.8\linewidth]{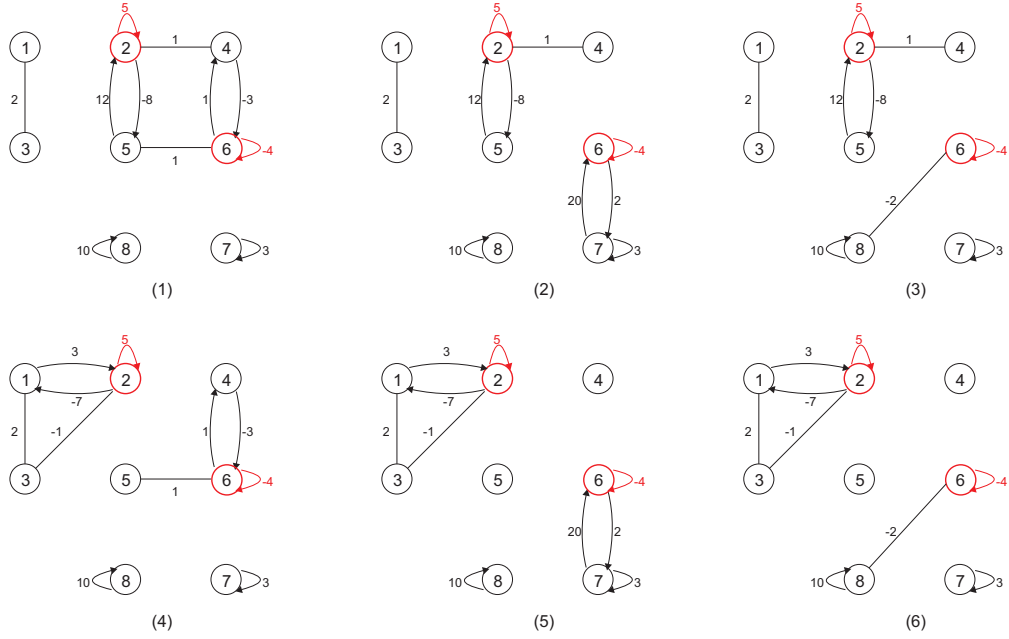}
                       \end{center}
                     \caption{$\mathcal{B}$-partitions of digraph of matrix $M_2$}
                     \label{allbm2}
         \end{figure}

\begin{corollary}
Let $G$ be a digraph having $t$ cut-vertices with cut-indices $T(1),T(2),\hdots,T(t)$, respectively. The number of $\mathcal{B}$-partitions of $G$ is  $$\prod_{i=1}^{t}T(i).$$  
\end{corollary}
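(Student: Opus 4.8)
The plan is to establish a bijection between the $\mathcal{B}$-partitions of $G$ and the assignments that, for each cut-vertex of $G$, select exactly one of the blocks containing it. First I would invoke the standard structural fact from block theory (valid here since connectivity is defined via undirected paths): any two distinct blocks of $G$ share at most one vertex, and a vertex lies in more than one block if and only if it is a cut-vertex. Hence each non-cut-vertex lies in exactly one block, while the $j$-th cut-vertex lies in exactly $T(j)$ blocks, which is precisely the meaning of its cut-index.

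Next I would reformulate a $\mathcal{B}$-partition purely in terms of vertex sets. Since each $\hat{B_i}$ is an induced subdigraph with $V(\hat{B_i}) \subseteq V(B_i)$, and since each block $B_i$ is itself an induced subdigraph of $G$, the induced subdigraph of $G$ on a subset $S_i \subseteq V(B_i)$ coincides with the induced subdigraph of $B_i$ on $S_i$. Thus specifying the tuple $(\hat{B_1}, \ldots, \hat{B_k})$ is equivalent to specifying vertex sets $S_1, \ldots, S_k$ with $S_i \subseteq V(B_i)$ that partition $V(G)$; the covering-and-disjointness requirement of a partition forces every vertex of $G$ into exactly one $S_i$.

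Then I would carry out the count vertex by vertex. A non-cut-vertex $v$ lies in a unique block $B_i$, so $v \in V(B_j)$ only for $j=i$; being forced to be covered, $v$ must land in $S_i$, with no freedom. The $j$-th cut-vertex lies in $T(j)$ blocks, and the covering condition places it in at least one of the corresponding sets while disjointness places it in at most one, so it lies in exactly one, giving $T(j)$ choices. These choices over distinct vertices are manifestly independent, yielding $\prod_{i=1}^{t} T(i)$ admissible assignments. I would then confirm the correspondence is a bijection: any such assignment produces pairwise-disjoint sets $S_i \subseteq V(B_i)$ covering $V(G)$ (hence a valid $\mathcal{B}$-partition, with $\hat{B_i}$ a null graph when $S_i = \emptyset$), and conversely each $\mathcal{B}$-partition recovers a unique assignment, so the number of $\mathcal{B}$-partitions equals $\prod_{i=1}^{t} T(i)$.

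The only delicate point—the ``main obstacle''—is the structural justification that blocks intersect only in cut-vertices and that each cut-vertex's placement is a genuinely free and independent choice; in particular I must check that the definition of a $\mathcal{B}$-partition permits an empty $S_i$ (a null $\hat{B_i}$), which it does by the stated convention $\det(\hat{B_i}) = \emph{per}(\hat{B_i}) = 1$. Once this is secured, the enumeration is an immediate product of independent choices.
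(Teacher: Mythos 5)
Your proof is correct and takes essentially the same approach as the paper: the paper's own argument is exactly that each cut-vertex must be associated with precisely one of its $T(i)$ blocks in a $\mathcal{B}$-partition, and the choices multiply to give $\prod_{i=1}^{t}T(i)$. Your write-up simply makes explicit the bijection and the supporting structural facts (blocks intersect only in cut-vertices, non-cut-vertices have no freedom, empty $\hat{B_i}$ is permitted) that the paper's one-line proof leaves implicit.
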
    
\begin{proof}
Each cut-vertex associates with an induced subdigraph of exactly one block in a $\mathcal{B}$-partition. For $i$-th cut-vertex there are $T(i)$  choices of blocks. Hence, the result follows. 
\end{proof}        
    
\subsection{Algorithm for $\mathcal{B}$-partitions}\label{albp}
If $Q$ is a subdigraph of $G$, then $G\setminus Q$ denotes the induced subdigraph of $G$ on the vertex subset $V(G)\setminus V(Q)$. Here, $V(G)\setminus V(Q)$ is the standard set-theoretic subtraction of vertex sets. Let us assume that $G$ has $k$ blocks $B_1, B_2,\hdots,B_k$. Let $t$ be the number of cut-vertices in $G$, assume them to be $v^c_1, v^c_2,\hdots,v^c_t$, where superscript $c$ denotes cut-vertex. Let $T(i)$ denote the cut-index of the cut-vertex $v^c_i$, $i=1,2,\hdots,t$. And, let $S(i)$ be the array which contains indices of the blocks to which cut-vertex $v^c_i$ belongs, and $S(i,j)$ denote its $j$-th element.

\begin{example} \label{info}
For the digraph of matrix $M_1$, $t=2$. Let $v^c_1=v_2, v^c_2=v_6 $ then, $T(1)=2, T(2)=2, S(1)=[1,2]$ and $S(2)=[2,3]$. Similarly, for the digraph of matrix $M_2$, $t=2$. Let  $v^c_1=v_2, v^c_2=v_6 $ then, $T(1)=2, T(2)=3, S(1)=[1,2],$ and $S(2)=[2,3,4]$. 
\end{example}

The algorithm to find all the $\mathcal{B}$-partitions of a digraph $G$ is given in Algorithm \ref{balgo}. For a $\mathcal{B}$-partition the algorithm associates each cut-vertex to exactly one of its associated block and remove from rest of the blocks, thus it recursively finds all the $\mathcal{B}$-partitions. Interested readers can see the Matlab codes for the determinant\footnote{\url{<https://in.mathworks.com/matlabcentral/fileexchange/62442-matrix-det--a-->}} and the permanent\footnote{\url{<https://in.mathworks.com/matlabcentral/fileexchange/62443-matrix-per--a-->}} using the Algorithm\footnote{Although these version of codes are not so fast, its due to the way codes are written, not due to the algorithm.} \ref{balgo}.

\begin{algorithm} 
\SetAlgoLined
\KwResult{$\mathcal{B}$-partition of $G$\\ Parameters and initialisation: $t$ is number of cut-vertices, let they be $v^c_1, v^c_2,\hdots,v^c_t$.  For $i=1,2,\hdots,t$, assume that cut-vertex $v^c_i$ has cut-index equal to $T(i)$.  $S(i)$ is the array which contains indices of the blocks to which cut-vertex $v^c_i$ belongs, and $S(i,j)$ denotes its $j$-th element.  $X$ is initialised to zero column vector of order $t\times 1$, and $X(i)$ denotes its $i$-th entry.  Function $\mathcal{B}$-part$(it,X,t,S(it),T(it))$ gives all the $\mathcal{B}$-partitions recursively.}
 $it=1$ \;
 $\mathcal{B}$-part$(it,X,t,S(it),T(it))$
 
  \eIf{$(it>t)$}{   $S(it)=S(it-1)$ \; $T(it)=T(it-1)$ \;
  \For{$i=1:t$}{\For{$j=1:T(i)$}{Remove cut-vertex $v^c_i$ from blocks $B_{S(i,j)}$ when  $X(i)\neq S(i,j) $.}}
   The resulting subgraphs of blocks will give a $\mathcal{B}$-partition. save it\;
   return }
   {\For{$i=1:T(it)$}{$X(it)=S(it,i)$\;  $\mathcal{B}$-part$(it+1,X,t,S(it+1),T(it+1))$}
  } 
 \caption{Algorithm for $\mathcal{B}$-partitions.}\label{balgo}
\end{algorithm}

\emph{Explanation and the complexity:} The function $\mathcal{B}$-part$(it,X,t,S(it),T(it))$ recursively assigns the cut-vertices $v^c_1, v^c_2,\hdots,v^c_t$ to one of their blocks, respectively, and removes them from rest of their blocks. The vector $X$ does this job. Whenever $it>t$ the algorithm assign the cut-vertex $v^c_i$ to the block with index $X(i)$ and removes it from all of its remaining blocks. Which gives a $\mathcal{B}$-partition. The Algorithm then execute from where it previously left, that is, to line 14.  For each cut-vertex $v^c_i, i=1,\hdots,t$, algorithm executes for $T(i)$ times, hence the total steps taken are $\prod_{i=1}^tT(i)$.    
As an example, Figure \ref{fig2} tells how the vector $X$ is updating for the matrix $M_1$ to calculate its $\mathcal{B}$-partitions using the information provided in Example \ref{info}. We have $v^c_1=v_2, v^c_2=v_6$, and $v^c_1$ is in blocks $B_1, B_2$ while $v^c_2$ is in blocks $v^c_2$. The values of $X$ in stages (3)-(6) gives all the four $\mathcal{B}$-partitions. For example in stage (3), $v_1^c$ is assigned to $B_1$ and removed from $B_2$, while, $v_2^c$ is assigned to $B_2$ and removed from $B_3$. It gives the $\mathcal{B}$-partition $B_1, B_2\setminus v_1^c, B_3\setminus v_2^c$, see Figure \ref{allbm1}(3).

\begin{figure} 
\hspace{20mm}\begin{tabular}{ c c c c}
  &  &  & $X$\\ 
 $v_1^c$ & $B_1$ & $B_2$ & 0\\  
 $v_2^c$ & $B_2$ & $B_3$ & 0 \\
  & & (1) &   
\end{tabular}, \ \ \ \ \ \begin{tabular}{ c c c c}
  &  &  & $X$\\ 
 $v_1^c$ & \color{red}{$B_1$} & $B_2$ & \color{red}{1} \\  
 $v_2^c$ & $B_2$ & $B_3$ & 0 \\  & & (2) &   
\end{tabular}, \ \ \ \ \ \begin{tabular}{ c c c c}
  &  &  & $X$\\ 
 $v_1^c$ & \color{red}{$B_1$} & $B_2$ & \color{red}{1} \\  
 $v_2^c$ & \color{red}{$B_2$} & $B_3$ & \color{red}{2} \\
  & & (3) &  
\end{tabular}, 

\hspace{20mm}\begin{tabular}{ c c c c}
  &  &  & $X$\\ 
 $v_1^c$ & \color{red}{$B_1$} & $B_2$ & \color{red}{1} \\  
 $v_2^c$ & $B_2$ & \color{red}{$B_3$} & \color{red}{3} \\
  & & (4) &  
\end{tabular}, \ \ \ \ \  \begin{tabular}{ c c c c}
  &  &  & $X$\\ 
 $v_1^c$ & $B_1$ & \color{red}{$B_2$} & \color{red}{2} \\  
 $v_2^c$ & \color{red}{$B_2$} & $B_3$ & \color{red}{2} \\
  & & (5) &  
\end{tabular}, \ \ \ \ \  \begin{tabular}{ c c c c}
  &  &  & $X$\\ 
 $v_1^c$ & $B_1$ & \color{red}{$B_2$} & \color{red}{2} \\  
 $v_2^c$ & $B_2$ & \color{red}{$B_3$} & \color{red}{3} \\
  & & (6) &  
\end{tabular}
\caption{ Updation of vector $X$ in Algorithm \ref{balgo} for the matrix $M_1$. The index of selected block (shown in red) for a cut-vertex is copied to the correspondind entry in $X$.} \label{fig2}
\end{figure}

 For matrix $M_1$, Algorithm \ref{balgo} begins the execution by calling the function $\mathcal{B}$-part$(1, X,2, S(1),T(1))$ in line 2. As $it=1\ngtr t$ execution will proceed from line 14. Line 15 will make $X(1)=S(1,1)=1$, that is $X=[1, 0]^T$.  At line 16 the function $\mathcal{B}$-part $(2,X,2, S(2),T(2))$ will be called, which again starts from line 14, as still $it=2\ngtr t$. Line 15 makes $X(2)=S(2,1)=2$, that is $X=[1,2]^T$.  Line 16 then call the function  $\mathcal{B}$-part $(3,X,2, S(3),T(3))$. As $3>t$, now the execution starts from line 4. $S(it), T(it)$ are restored to just previous values. By line 8, we need to remove vertex $v^c_1$, that is vertex $v_2$ from block $B_2$. Similarly, we need to remove vertex $v^c_2$, that is vertex $v_6$ from block $B_3$. The resulting subgraphs of blocks are $B_1, B_2\setminus v_2, B_3\setminus v_6$, which gives a $\mathcal{B}$-partition, see Figure \ref{allbm1}(3). In line 12, execution will return to line 14, where now $i=2, it=2$, thus $X(2)=S(2,2)=3$, that is $X=[1,3]^T$. Line 16 then call the function  $\mathcal{B}$-part $(3,X,2, S(3),T(3))$. This time we need to remove vertex $v^c_1$, from block $B_2$ and vertex $v^c_2$ from block $B_2$. The resulting subgraphs of blocks are $B_1, B_2\setminus (v_2,v_6), B_3$, which gives an another $\mathcal{B}$-partition, see Figure \ref{allbm1}(4). In line 12 execution will return to execution where $it=1$, and $i=2$. Proceeding as before we get two more $\mathcal{B}$-partitions $B_1\setminus v_2, B_2, B_3\setminus v_6$ and $B_1\setminus v_2, B_2\setminus v_6, B_3$, see Figure \ref{allbm1}((1),(2), respectively).

In the following subsection, we give a procedure to find the determinant and permanent of a given square matrix using the $B$-partitions of its digraph \cite{singh2017characteristic}.

\subsection{Determinant and permanent of matrix} \label{dp}
Let $G$ be a digraph having $k$ blocks $B_1, B_2, \hdots, B_k$. Let $t_{nz}$ cut-vertices in $G$  have nonzero weight on their loops. Let the weights of loops at these vertices be $\alpha_1, \alpha_2, \hdots, \alpha_{t_{nz}}$, respectively. Also, assume that these cut-vertices have cut-indices $T(1), T(2), \hdots, T(t_{nz})$, respectively. Then, the following are the steps to calculate the determinant (permanent) of $G$.
    
\begin{enumerate}
 \item For $q=0,1,2,\dots,t_{nz},$\begin{enumerate}
\item Select any $q$ cut-vertices at a time which have nonzero loop weights. In each $\mathcal{B}$-partition, remove these $q$ cut-vertices from subdigraph to which they belong. 
\begin{enumerate}
\item For all the resulting partitions, sum their $\det$-summands ($\text{per}$-summands).  
Multiply the sum by $\prod_{i}\frac{(-\alpha_i)(T(i)-1)}{T(i)}$ where, $\alpha_i$, and $T(i)$ are the weight and the cut-index of the removed $i$-th cut-vertex, respectively, for $i=1,2,\dots,q$.  

\item For all possible $\binom{t_{nz}}{q}$ combinations of removed cut-vertices, sum all the terms in i.
 
\end{enumerate}  
\end{enumerate}
     
    \item Sum all the terms in 1.
    
\end{enumerate}
    
\begin{example} \label{example}
The contribution of different summands to calculate the determinant of $M_1$ is as follows. 
    \begin{enumerate}
    \item For $q=0,$
 that is, without removing any cut-vertex we get the following part of $\det(M_1)$.
    \begin{equation*}
    \begin{split}
\det(B_1)\det(B_2\setminus v_2)\det(B_3\setminus v_6)+\det(B_1)\det(B_2\setminus (v_2,v_6))\det(B_3)\\+\det(B_1\setminus v_2)\det(B_2)\det(B_3\setminus v_6)+\det(B_1\setminus v_2)\det(B_2\setminus v_6)\det(B_3) 
    \end{split}
    \end{equation*}
    
    \item For $q=1.$ 
  \begin{enumerate}
 \item On removing $v_2:$ recall that the loop-weight of $v_2$ is $5$ and its cut-index is 2. Removing the cut-vertex $v_2$ we get the following part, 
    \begin{equation}
 -5\Big(\det(B_1\setminus v_2)\det(B_2\setminus v_2)\det(B_3\setminus v_6)+\det(B_1\setminus v_2)\det(B_2\setminus (v_2,v_6))\det(B_3)\Big).   
 \end{equation}
    
    \item On removing $v_6:$ the loop-weight of $v_6$ is $-4$ and its cut-index is 2. Removing the cut-vertex $v_6$ we get the following part, \begin{equation}
    4\Big(\det(B_1)\det(B_2\setminus (v_2,v_6))\det(B_3\setminus v_6)+\det(B_1 \setminus v_2)\det(B_2\setminus v_6)\det(B_3\setminus v_6)\Big).
    \end{equation}
    
  \end{enumerate}

    \item For $q=2,$ that is, removing the cut-vertices $v_2$, and $v_6$ we get the following part,\begin{equation}
    -5\times 4\det(B_1\setminus v_2)\det(B_2\setminus (v_2,v_6))\det(B_3\setminus v_6).
    \end{equation}
    
    Adding (1), 2(a), 2(b) and (3) give $\det(M_1).$ 
    \end{enumerate}

\end{example}

The permanent of the matrix $M_1$ is similarly calculated using the per-summands instead of $\det$-summands. 

\section{Parametrized complexity of determinant and permanent}\label{pcmdp}
From the last section, we saw how the subdigraphs of blocks are used to calculate the determinant and the permanent of a given square matrix. Before we proceed to find parametrized complexity of matrix determinant, let us observe the following. Let $A$ be the following square matrix of order $r$, as follows 
    $$A=\begin{bmatrix}
    A_1 & \mathbf{b}\\ \mathbf{c}& d
    \end{bmatrix}.$$
where, $A_1$ is the principle submatrix of order $r-1$, $\mathbf{b}$ is the column vector of order $r-1$, $\mathbf{c}$ is the row vector of order $r-1$, and $A(r,r)=d$. 
\begin{enumerate}

\item If $A_1$ is invertible then by Schur' complement for determinant \cite{bapat2014adjacency},
$$\det (A)=\det (A_1) \det(d-\mathbf{c}A^{-1}_1\mathbf{b}).$$
    
\item If $A_1$ is not invertible, and $d$ is nonzero then, $$\det (A)=d \times \det(A_1-\mathbf{b}\frac{1}{d}\mathbf{c}).$$
\item If $A_1$ is not invertible, and $d$ is zero then, $$\det (A)= \det \begin{bmatrix}
A_1 & \mathbf{b}\\ \mathbf{c}& 0
\end{bmatrix}=\det \begin{bmatrix}
A_1 & \mathbf{b}\\ \mathbf{c}& 1 \end{bmatrix}-1\times \det( A_1)=\det(A_1-\mathbf{b}\mathbf{c}).$$ 
\end{enumerate}
Fortunately, complexity of inverse of a square matrix of order $r$ is also $O(r^{\epsilon})$ \cite{aho1974design}. 
Thus, in all the above cases, we observe that the $\det(A)$ can be calculated in terms of the determinant of lower order matrices. For example in Subsection \ref{dp}, in Example \ref{example}, $\det[4,5,6]$ can be calculated using $\det[4,5]$. Now, in $G$ consider a block $B_i$ has $t_i$ number of cut-vertices. We can first calculate the determinant of resulting subgraph after the removal of all the $t_i$ cut-vertices from $B_i$. Then this determinant can be used to calculate the determinant of resulting subgraph after removal of any $t_i-1$ cut-vertices. In this way, in general determinant of resulting subgraph after removal of $i$ cut-vertices can be used to calculate the determinant of resulting subgraph after removal of $i-1$ cut-vertices. Now we proceed towards the parameterized complexity.

Let a digraph $G$ have $k$ blocks $B_1, B_2,\hdots, B_k$. Let the sizes of the blocks be $n_1,n_2,\hdots, n_k$, and the number of cut-vertices in the blocks be $t_1, t_2,\hdots, t_k$, respectively. Then, during the calculation of determinant of $G$, for a particular block $B_i$, the determinant of subdigraph of size $n_i-j$ is being calculated $\binom{t_i}{j}$ times. Also, in the view of the above observation, in order to calculate the determinant of subdigraph of order $n_i-j$ we can use determinant of subdigraph of order $n_i-j-1$. Hence, the complexity of calculating the determinant is  

$$O\Big(\sum_{i=1}^{k}\sum_{j=0}^{t_i}\binom{t_i}{j}(n_i-j-1)^{\epsilon}\Big).$$

As $(n_i-j-1)< n_i$, an upper bound of the above complexity is 
\begin{equation}\label{cd}
O\Big(\sum_{i=1}^{k}2^{t_i}
n_i^{\epsilon}\Big).
\end{equation}

The obvious pertinent question that follows is, for which combinations of $n_i, t_i$ the complexity beats the state of art complexity $O(n^{\epsilon})$. Thus, we need to solve the following inequality

\begin{equation}
\sum_{i=1}^{k}2^{t_i}n_i^{\epsilon}< n^{\epsilon}.
\end{equation}

Similarly, for the permanent the parametrized complexity is 

$$ O\Big(\sum_{i=1}^{k}\sum_{j=0}^{t_i}\binom{t_i}{j}2^{(n_i-j)}(n_i-j)^2\Big).$$
A upper bound of the above complexity is 
\begin{equation}\label{cp}
O\Big(\sum_{i=1}^{k}2^{t_i}2^{n_i}n_i^2\Big).
\end{equation} 
Similar to the determinant, the question that follows for the permanent is, for which combinations of $n_i, t_i$ the above complexity beats $O(2^{n}n^2)$. Thus is, we need to solve the following inequality
\begin{equation}
\sum_{i=1}^{k}2^{t_i}2^{n_i}n^2_i< 2^nn^2.
\end{equation}

\begin{figure}[hbt]
\centering
  \includegraphics[width=12cm]{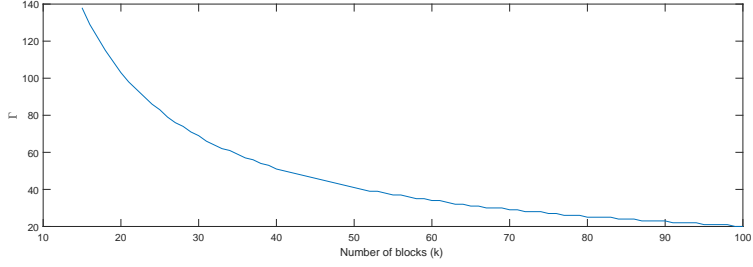}
  \caption{Example. $n=1000, \Delta=200, \epsilon=2.373$. The points below the curves gives the combinations of $\Gamma, k$ for which the proposed algorithm for determinant will work faster than the state of art algorithms.}\label{dtfig}
\end{figure}

\subsection{Parametrized complexities}
Let $\Gamma$ be the largest of the numbers of the cut-vertices in any block, that is, $\Gamma=max\{t_1,t_2,\hdots t_k\}.$ Let $\Delta$ be the size of the largest block, that is, $\Delta=max\{n_1,n_2,\hdots, n_k\}.$
From Expression (\ref{cd})
\begin{equation}
\sum_{i=1}^{k} 2^{t_i}n_i^{\epsilon}\leq 2^{\Gamma}\sum_{i=1}^{k}n_i^{\epsilon}\leq k2^{\Gamma}\Delta^{\epsilon}
\end{equation}
In order to beat $n^{\epsilon},$ 
\begin{equation}
k2^{\Gamma}\Delta^{\epsilon}\leq n^{\epsilon}
\end{equation}
Taking logarithm on both the sides, we have 
\begin{equation}
\Gamma \ln 2 \leq \ln\bigg(\frac{1}{k}\Big(\frac{n}{\Delta}\Big)^{\epsilon}\bigg), 
\end{equation} 
that is 
\begin{equation}
\Gamma=O\Bigg(\ln\bigg(\frac{1}{k}\Big(\frac{n}{\Delta}\Big)^{\epsilon}\bigg)\Bigg).
\end{equation}
For example in Figure (\ref{dtfig}), for $n=1000, \Delta=200$, $\epsilon=2.373$, the combinations of $\Gamma, k$ are given for which the proposed algorithm for determinant will work faster than the state of art algorithms.

Similarly for the permanent from the Expression (\ref{cp})

\begin{equation}
\sum_{i=1}^{k}2^{t_i}2^{n_i}n_i^2\leq k2^{\Gamma}2^{\Delta}\Delta^2 
\end{equation}

in order beat $2^nn^2$ we have 

\begin{equation}
k2^{\Gamma}2^{\Delta}\Delta^2 \leq 2^nn^2.
\end{equation}

Taking logarithm on both the sides 
\begin{equation}
\Gamma \ln2\leq \ln\Big(\frac{2^nn^2}{k2^\Delta\Delta^2}\Big),
\end{equation}
that is 
\begin{equation}
\Gamma=O\Bigg( \ln\Big(\frac{2^nn^2}{k2^\Delta\Delta^2}\Big)\Bigg).
\end{equation}

It is clear from the above equation that for the permanents, if $\Delta<n$, then $2^\Delta \ll 2^n,$ the proposed algorithm will work faster than the state of art for almost all the cases.  

\textbf{Acknowledgment.} The first author acknowledges support from the JC Bose Fellowship, Department of Science and Technology, Government of India. The authors are thankful to Annuay Jayaprakash, Vijay Paliwal for the help in the algorithm. 
\bibliographystyle{plain}
    \bibliography{MDPC}

 \end{document}